\def\calP{\mathcal{P}}
\def\Ed{\mathsf{E}\mathrm{d}}
\newtheorem{observation}{Observation}
\begin{document}

\title{The k-Center Problem of Uncertain Points on Graphs\thanks{This research was supported in part by U.S. National Science Foundation under Grant CCF-2339371.}}


\author{Haitao Xu\thanks{Corresponding author} \and Jingru Zhang}

\institute{Cleveland State University, Cleveland, Ohio 44115, USA \\
\email{h.xu12@vikes.csuohio.edu, j.zhang40@csuohio.edu}}
\maketitle             

\begin{abstract}
 In this paper, we study the $k$-center problem of uncertain points on a graph. Given are an undirected simple graph $G = (V, E)$ and a set $\calP$ of $n$ uncertain points where each uncertain point with a non-negative weight has $m$ possible locations on $G$ each associated with a probability. The problem aims to find $k$ centers (points) on $G$ so as to minimize the maximum weighted expected distance of uncertain points to their own expected closest centers. No previous work exist for the $k$-center problem of uncertain points on undirected graphs. We propose exact algorithms that solve respectively the case of $k=2$ in $O(|E|^2m^2n\log |E|mn\log mn )$ time and the problem with $k\geq 3$ in $O(\min\{|E|^km^kn^{k+1}k\log |E|mn\log m, |E|^kn^\frac{k}{2}m^\frac{k^2}{2}\log |E|mn\})$ time, provided with the distance matrix of $G$. In addition, an $O(|E|mn\log mn)$-time algorithmic approach is given for the one-center case. 
\keywords{Algorithms\and $K$-Center\and Graph\and Uncertain Points}
\end{abstract}

\section{Introduction}\label{sec:intro}
Facility location problems on uncertain demands have drawn growing attentions~\cite{ref:ChauUn06,ref:HuangSt17,ref:WangAn16,ref:WangCo19,ref:AlipourAp20,ref:AlipourIm21} in academia due to the inherent uncertainty of collected demand data. In this paper, we study the network $k$-center problem, a classical problem in facility locations, for uncertain points where each uncertain point models a demand customer with multiple possible locations on the network. 

Let $G = (V, E)$ be an undirected graph which contains no loops and no multiple edges, and where each edge is of a positive length. For any two vertices $u,v$ of $G$, denote their edge by $e(u,v)$ and the length of $e(u,v)$ by $\overline{e(u,v)}$. Assume $G$ is embedded in the Euclidean plane, so we can talk about ``points'' on an edge: A (deterministic) point $p$ on $e(u,v)$ is represented by a tuple $(u, v, t(p))$, where $t(p)$ is the distance of $p$ to $u$ along $e(u,v)$. For any two points $p$ and $q$ of $G$, the distance $d(p,q)$ between them is the sum of the lengths of all edges on their shortest path $\pi(p, q)$ along $G$. 

An uncertain point of $G$ is a point whose location in $G$ is not certain and follows a discrete probability density function. Let $\calP$ be a set of $n$ uncertain points $P_1, \cdots, P_n$ in $G$. Each $P_i$ is associated with $m$ points $p_{i1}, \cdots, p_{im}$ on $G$, where each $p_{ij}$ is attached with a value $f_{ij}\geq 0$ so that $P_i$ appears at point $p_{ij}$ with the probability $f_{ij}$. Additionally, each $P_i\in\calP$ is of weight $w_i\geq 0$. For any point $p$ of $G$, the distance of $P_i$ to $p$ is defined as $\sum_{j=1}^{m}f_{ij}\cdot d(p_{ij}, p)$, denoted by $\Ed(P_i, p)$, which is the \textit{expected} version of the distance between $P_i$ and $p$. 

The $k$-center problem is to find a set $Q$ of $k$ points (centers) in $G$ so that the maximum weighted expected distance of uncertain points in $\calP$ to their own expected closest points in $Q$ is minimized. Namely, it aims to find a set $Q$ of $k$ points in $Q$ so as to minimize the objective value $\phi(\calP,Q) = \max_{P_i\in\calP} \min_{q\in Q}\{w_i\Ed(P_i,q)\}$. 

When $G$ is a path, the problem can be addressed in $O(mn\log mn + n\log k\log n)$ time~\cite{ref:WangOn15}. For $G$ being a tree graph, this problem can be solved in $O(n^2\log n\log mn+ mn\log^2mn\log n)$ time~\cite{ref:WangCo19}. 
On a general graph, the problem is NP-hard due to the NP-hardness of the case $m=1$~\cite{ref:KarivAnC79,ref:MegiddoNe83}. To the best of our knowledge, no known algorithms exist for solving it. In this paper, we give the first exact algorithms for solving the problem respectively under $k=2$ and $k\geq 3$, and develop a faster algorithm for the case of $k=1$.

\paragraph{Related Work.} For deterministic points, i.e., $m=1$, the network $k$-center problem has been explored a lot in the literature. Kariv and Hakimi~\cite{ref:KarivAnC79} proposed the first exact algorithm that solves the problem on a vertex-weighted graph in $O((|E|^k \cdot n^{2k-1}/(k-1)!)\cdot\lg n )$. Later, Tamir~\cite{ref:TamirIm88} gave a faster exact algorithm that solves the problem in $O(m^k n^k \alpha(n) \log^2 n)$ time where $\alpha(n)$ is the inverse Ackermann function. 
The state-of-art result was given by Bhattacharya and Shi~\cite{ref:BhattacharyaIm14}, and their algorithm computes the exact solution in $O(m^k n^{k/2} 2^{\log^*n} \log n)$ time by remodeling the \textit{decision} problem to a Klee's measure problem, which asks for the measure of the union of axis-parallel boxes. Moreover, faster exact algorithms~\cite{ref:MegiddoNe83,ref:MegiddoLi84,ref:Ben-MosheAn06,ref:Ben-MosheEf07,ref:BhattacharyaOp07,ref:ChenEf15,ref:WangAn21} were proposed for the problem on special graphs including tree and cactus graphs or the problem with a constant $k$.  

When it comes to uncertain points each with multiple possible locations, as the above mentioned, the $k$-center problem was studied first on the path network by Wang and Zhang~\cite{ref:WangOn15}. Later, they studied the problem on tree networks and gave an $O(n^2\log n\log mn+ mn\log^2mn\log n)$-time algorithm~\cite{ref:WangCo19}. In addition, the problem was considered for a constant $k$ respectively on a path, a tree, and a cactus graph, and faster algorithms were developed~\cite{ref:HuCo22,ref:XuTh23,ref:xu2023two,ref:XuTh25}. 
On general graphs, as far as we are aware, no known work exists for the problem even with $k=1$. Also, see the algorithm works~\cite{ref:WangOn15,ref:WangAn16,ref:HuCo22} for the path $k$-center problem of uncertain points with appearance of line segments.   

\paragraph{Our Approach.} 
We observe that the optimal objective value $\lambda^*$ is of the form $w_i\Ed(P_i,x) = w_j\Ed(P_j,x)$ or equals to the $y$-coordinate of a \textit{breakpoint} of some function $y=w_i\Ed(P_i,x)$ for $x$ in some edge. We thus form $y = w_i\Ed(P_i,x)$ of each $P_i\in\calP$ with respect to every edge so that a set of $O(|E|mn)$ lines is obtained such that $\lambda^*$ is decided by their lowest intersection with a \textit{feasible} $y$-coordinate. 

To find $\lambda^*$ among those intersections, we need to address the feasibility test: Given any value $\lambda>0$, determine whether $\lambda\geq\lambda^*$. Two algorithms are proposed for solving this problem. The first approach is based on an essential observation: A set of $O(|E|mn)$ points exists such that there must be $k$ points in this set that can \textit{cover} $\calP$ under $\lambda$ iff $\lambda\geq\lambda^*$. Hence, we test every $k$ points of this set for determining the feasibility of $\lambda$. The second approach, similar to the feasibility test in~\cite{ref:BhattacharyaIm14} for the case $m=1$, determines for every $k$-size subset of $V\cup E$, whether each edge in the subset does contain a point so that these points in edges and vertices in the subset can cover $\calP$ under $\lambda$. This can be reduced into a Klee's measure problem that asks for the measure of the union of axis-parallel boxes, and hence, we can apply the algorithm for the Klee's measure problem to solve it. 

For the case $k=1$, a faster algorithm is obtained by computing the \textit{local} center of $\calP$ in every edge. This can be reduced into a geometry problem that 
finds the lowest point on the upper envelope of polygonal chains. Hence, we can employ the dynamic convex hull data structure~\cite{ref:BrodalDy02} to solve this geometry problem so that the local center of $\calP$ in each edge is obtained in $O(mn\log mn)$ time. 

\section{Preliminaries}\label{sec:pre}
Since the problem on a tree network has been addressed in previous work, we consider only the general situation where $G$ is an undirected simple graph but not a tree. Hence, $|V| = O(|E|)$. 

Provided with the distance matrix, for any two point $p,q$ of $G$, their distance $d(p,q)$ can be obtained in $O(1)$ time since the shortest path length between any two vertices of the edges containing $p$ and $q$ can be known in $O(1)$ time by the distance matrix. As a result, for any $P_i\in\calP$, the expected distance $\Ed(P_i,p)$ of $P_i$ to point $p$ can be computed in $O(m)$ time. 

Recall that $e(u,v)$ is an arbitrary edge of $G$ which is incident to vertex $u$ and $v$. We abuse $e(u,v)$ to the set of all points on it. We say that a point is in the \textit{interior} of edge $e(u,v)$ if it is in set $e(u,v)\setminus\{u,v\}$. For point $p$ and edge $e(u,v)$, we say that a point of $e(u,v)$ is a \textit{semicircular} point of $p$ if this point has two shortest paths to $p$ such that the two paths or their subpaths form a cycle of $G$ including $e(u,v)$.

Recall that $x = (u, v, t(x))$ is a point of $e(u,v)$ at distance $t(x)$ to $u$ along $e(u,v)$. For any $P_i\in \calP$, $\Ed(P_i,x)$ is a function in $t(x)$ as $x$ moves along $e(u,v)$ from $u$ to $v$. We have the following lemma for $\Ed(P_i,x)$. 

\begin{lemma}\label{lem:expecteddistanceproperty}
    For any $P_i\in \calP$, function $\Ed(P_i,x)$ for $x\in e(u,v)$ is a piecewise linear function consisting of $O(m)$ pieces, and it can be explicitly computed in $O(m\log m)$ time. 
\end{lemma}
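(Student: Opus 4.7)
The plan is to first analyze each term $d(p_{ij},x)$ as $x$ varies along $e(u,v)$, then obtain $\Ed(P_i,x)$ by weighted summation, and finally sweep the breakpoints to build the explicit function. I parametrize $x=(u,v,t)$ with $t\in[0,L]$ where $L=\overline{e(u,v)}$, and let $d_u=d(p_{ij},u)$ and $d_v=d(p_{ij},v)$, both available in $O(1)$ from the distance matrix.

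First I would argue that for each $j$, $d(p_{ij},x)$ is piecewise linear in $t$ with $O(1)$ pieces. Any shortest path from $p_{ij}$ to $x$ either enters $e(u,v)$ at $u$, contributing length $d_u+t$, enters at $v$, contributing length $d_v+L-t$, or (only when $p_{ij}$ itself lies on $e(u,v)$) traverses $e(u,v)$ directly, contributing length $|t-t(p_{ij})|$. Every such expression is affine in $t$ with slope $\pm 1$, so their pointwise minimum is piecewise linear with $O(1)$ pieces, and its breakpoints---the semicircular points of $p_{ij}$ on $e(u,v)$---are computable in $O(1)$ time from $d_u$, $d_v$, $L$, and $t(p_{ij})$.

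Second, summing yields $\Ed(P_i,x)=\sum_{j=1}^{m}f_{ij}\,d(p_{ij},x)$, a nonnegative combination of $m$ piecewise linear functions each with $O(1)$ breakpoints. Hence $\Ed(P_i,x)$ is piecewise linear with $O(m)$ breakpoints and thus $O(m)$ pieces. For the $O(m\log m)$ construction I would collect all $O(m)$ candidate breakpoints in $O(m)$ total time, sort them in $O(m\log m)$, compute $\Ed(P_i,u)$ and the initial slope at $t=0$ in $O(m)$ by summing individual contributions, and then sweep the sorted breakpoints. At each breakpoint only the single responsible term $f_{ij}d(p_{ij},x)$ changes its slope, so the running slope and function value are updated in $O(1)$ per event, giving an $O(m\log m)$ total time dominated by the sort.

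The main obstacle I anticipate is the case $p_{ij}\in e(u,v)$: the naive two-entry formula $\min(d_u+t,d_v+L-t)$ alone over-estimates $d(p_{ij},x)$ because it ignores direct traversal along the edge. I would address this by explicitly including $|t-t(p_{ij})|$ in that case, and since this extra term still consists of slope-$\pm 1$ affine pieces, the pointwise minimum retains its $O(1)$-piece structure and the asymptotic bounds are unaffected.
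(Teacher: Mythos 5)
Your proposal is correct and follows essentially the same route as the paper: decompose $\Ed(P_i,x)$ into the per-location distance functions $d(p_{ij},x)$, show each is piecewise linear with $O(1)$ pieces breaking at the semicircular points (your pointwise-minimum-of-slope-$\pm1$ affine functions phrasing, including the extra $|t-t(p_{ij})|$ candidate when $p_{ij}\in e(u,v)$, is a compact restatement of the paper's case analysis), then sort the $O(m)$ breakpoints and sweep, updating slope and value per event. The only cosmetic difference is that the paper explicitly maintains, for each turning point, the list of all locations whose distance functions turn there, whereas you speak of a ``single responsible term'' per breakpoint; coincident breakpoints are handled by processing all events at the same coordinate together and do not affect the $O(m\log m)$ bound.
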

\begin{proof}
Let $p$ be any point of $G$. If $p\in e(u,v)$, suppose $p$ is at distance $t(p)$ to $u$ along $e(u,v)$. We first consider the distance $d(p,x)$ from $p$ to $x$. 

On the one hand, $p$ is not in set $e(u,v)\setminus\{u,v\}$. 
As illustrated in Fig.\ref{fig:distanceF} (a), as $x$ moves from $u$ to $v$ along $e(u,v)$, $d(p,x)$ increases, decreases, or first increases until $x$ reaches a point of $e(u,v)$ and then starts to decrease~\cite{ref:KarivAnC79}. 
In details, if $d(p,u) + \overline{e(u,v)} = d(p,v) $ (i.e., the first case of $d(p,x)$), then $d(p,x) = d(p,u) + t(x)$ at any $x\in e(u,v)$. 
If $d(p,v) + \overline{e(u,v)} = d(p,u) $ (i.e., the second case of $d(p,x)$), $d(p,x) = d(p,v) + \overline{e(u,v)} - t(x)$ at any $x\in e(u,v)$. 
Otherwise, $d(p,u) + \overline{e(u,v)} > d(p,v) $ and $d(p,v) + \overline{e(u,v)} > d(p,u)$; 
so $p$ has an unique point $p'=(u,v,t(p'))$ in the interior of $e(u,v)$ such that $d(p,u) + t(p') = d(p,v) + \overline{e(u,v)} - t(p')$, but $d(p,u) + t(x) < d(p,v) + \overline{e(u,v)} - t(x)$ for any $x$ in the segment $[u, p')$ of $e(u,v)$ 
and $d(p,u) + t(x) > d(p,v) + \overline{e(u,v)} - t(x)$ for $x\in (p',v]$. 
By the definition, $p'$ is the semicircular point of $p$ on $e(u,v)$. 
In this case, $d(p,x) = d(p,u) + t(x)$ for $x\in [u, p']$, and 
$d(p,x) = d(p,v) + \overline{e(u,v)} - t(x)$ for $x\in [p',v]$. 

 \begin{figure}
 \begin{minipage}{0.5\textwidth}
    \centering
    \includegraphics[scale = 0.7]{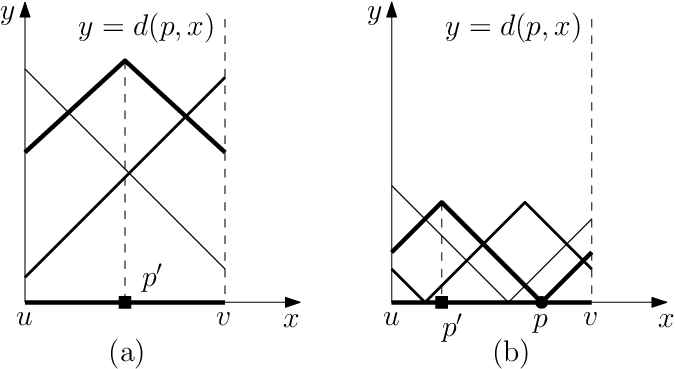}
    \caption{Illustrating function $d(p,x)$ of a point $p$ for $x$ in edge $e(u,v)$. (a) is for the case where $p$ is not in $e(u,v)\setminus\{u,v\}$ while (b) is for the situation where $p$ is in the interior of $e(u,v)$. 
    }
    \label{fig:distanceF}        
 \end{minipage}
 \begin{minipage}{0.5\textwidth}
    \centering
    \includegraphics[scale = 0.57]{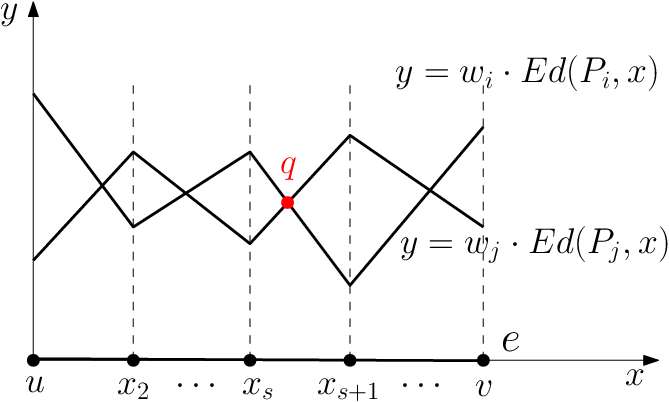}
    \caption{Illustrating that the local center of two uncertain points $P_i$ and $P_i$ on edge $e(u,v)$ is determined by the lowest intersection $q$ of their weighted expected distance functions where their slopes are of opposite signs.}
    \label{fig:Edfunction}
\end{minipage}
 \end{figure}

On the other hand, $p$ is in the interior of $e(u,v)$. As illustrated in Fig.\ref{fig:distanceF} (b), 
if $d(p,u)<t(p)$, i.e, $\pi(p,u)\notin e(u,v)$, then there is an unique point $p'$ in the interior 
of $e(u,v)$ so that $d(p,u) + t(p') = t(p) - t(p')$, but $d(p,u) + t(x) < t(p) - t(x)$ for any $x\in [u, p')$ and 
$d(p,u) + t(x) > t(p) - t(x)$ for $x\in (p',v]$. Clearly, $p'$ is the semicircular point of $p$ on $e(u,v)$. 
Hence, if $d(p,u)<t(p)$, $d(p,x) = t(x) + d(p,u)$ for $x\in [u, p']$ and $d(p,x) = |t(p) - t(x)|$ for any $x\in [p', v]$. Likewise, if $d(p,v)<\overline{e(u,v)} - t(p)$, then there is an unique point $p''=(u,v,t(p''))$ in $e(u,v)$, which is between $p$ and $v$, so that $t(p'') - t(p) = d(p,v) + \overline{e(u,v)}- t(p'')$, 
but $|t(x) - t(p)| < d(p,v) + \overline{e(u,v)}- t(x)$ for any $x\in [u,p'')$ 
and $t(x) - t(p) > d(p,v) + \overline{e(u,v)}- t(x)$ for any $x\in (p'',v]$. Point $p''$ is $p$'s semicircular point on $e(u,v)$.  
Otherwise, $d(p,u) =t(p)$ and $d(p,v)=\overline{e(u,v)} - t(p)$, so $d(p,x) = |t(p) - t(x)|$ for any $x\in e(u,v)$. 

Thus, $d(p,x)$ for $x\in e(u,v)$ is a piecewise linear function consisting of $O(1)$ pieces; it turns at $p$'s semicircular point and $p$ if $p$ has a semicircular point on $e(u,v)$ and $p\in e(u,v)$. Refer to any point on $e(u,v)$ where $d(p,x)$ 
turns as its \textit{turning} point. It follows that $\Ed(P_i,x)$ is a piecewise linear function, which turns at each turning point of functions $d(p_{ij},x)$ for all $1\leq j\leq m$, 
and its graph consists of $O(m)$ pieces. 

Now we discuss how to determine $\Ed(P_i,x)$ for $x\in e(u,v)$. First, we determine $d(p_{ij}, x)$ 
in $x\in e(u,v)$ for all $1\leq j\leq m$ and all their turning points on $e(u,v)$ in $O(m)$ time. Then, sort all turning points on $e(u,v)$ by their distances to $u$ along $e(u,v)$ in $O(m\log m)$ time. Join $u$ and $v$ to the ordered set of turning points. We continue to scan the set to remove duplicates while compute for each distinct turning point, a list that contains all locations of $P_i$ whose distance functions $d(p_{ij},x)$ turn at this turning point. Let $X_i = \{x_{i,1}, \cdots, x_{i,m_i}\}$ be the obtained (ordered) set of $\Ed(P_i,x)$'s turning points on $e(u,v)$, where $m_i = O(m)$, $x_{i,1} = u$, and $x_{i,m_i} = v$. Denote by $L_{i,s}$ the obtained list for $x_{i,s}$ which contains all $P_i$'s locations whose functions $d(p_{ij},x)$ turn at $x_{i,s}$. Set $X_i$ can be obtained totally in $O(m\log m)$ time. 

Proceed to determine $\Ed(P_i,x)$ over each segment $[x_{i,s}, x_{i,s+1}]$ for all $1\leq s\leq m_i-1$ in order. Loop $X_i$ from the beginning. 
For $x_{i1}$, we sum $f_{ij}d(p_{ij}, x)$ at $x = u$ for all $1\leq j\leq m$ to determine $\Ed(P_i,x)$ over $[x_{i,1}, x_{i,2}]$, and attach the obtained function with $x_{i,1}$ in $X_i$. For each $x_{i,s}$ with $2\leq s < m_i$, letting $g(x)$ represent $\Ed(P_i,x)$ over $[x_{i,s-1}, x_{i,s}]$ (which is a linear function), we loop through the list $L_{i,s}$ of $x_{i,s}$ to determine $\Ed(P_i,x)$ over $[x_{i,s}, x_{i,s+1}]$ as follows. For each location $p_{ij}$ in $L_{i,s}$, subtract $f_{ij}d(p_{ij},x)$ for $x\in [x_{i,s-1}, x_{i,s}]$ from $g(x)$ and then add $f_{ij}d(p_{ij},x)$ for $x\in [x_{i,s}, x_{i,s+1}]$ to $g(x)$. 
Recall that $d(p_{ij},x)$ at any point can be determined in $O(1)$ time. It thus takes $O(|L_{i,s}|)$ time to determine $\Ed(P_i,x)$ for $x\in [x_{i,s}, x_{i,s+1}]$. Due to $\sum_{s=1}^{s=m_i}L_{i,s} = O(m)$, 
it takes $O(m\log m)$ to determine $\Ed(P_i,x)$ for $x\in e(u,v)$, provided with the distance matrix. 

Thus, the lemma holds. \qed
\end{proof}

Lemma~\ref{lem:expecteddistanceproperty} computes a set $X_i = \{x_{i1}, \cdots, x_{i,m_i}\}$ for $P_i$ that consists all turning points of function $\Ed(P_i,x)$ on $e$ where each point $x_{i,s}$ is attached with $\Ed(P_i,x)$ only for $x\in [x_{i,s}, x_{i,s+1}]$. Additionally, $(x_{i,j}, \Ed(P_i,x_{i,j}))$ for all $1\leq j\leq m_i $ are all breakpoints of $y = \Ed(P_i,x)$. 

Recall that $\lambda^*$ is the optimal objective value of the problem. Although $\Ed(P_i,x)$ may be neither convex nor concave for an edge of $G$, the following Lemma leads a finite candidate set for $\lambda^*$. (See Fig.\ref{fig:Edfunction}.) 

\begin{lemma}\label{lem:candidateset}
    $\lambda^*$ is of the form $w_i\cdot\Ed(P_i,x) = w_j\cdot\Ed(P_j,x)$ so that the slopes of the two functions at the intersection are of opposite signs, or $\lambda^*$ is determined by a breakpoint of some function $w_i\cdot\Ed(P_i,x)$. 
\end{lemma}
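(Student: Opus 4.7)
The plan is to analyze the local geometry of the weighted expected-distance functions at an optimal center. Fix an optimal center set $Q^*$, and among all optima pick one that minimizes the number of \emph{critical} uncertain points $|T|$, where $T = \{P_i \in \calP : \min_{q \in Q^*} w_i \Ed(P_i, q) = \lambda^*\}$. Choose a critical center $q^* \in Q^*$ that attains the max for some $P_i \in T$, let $e$ be an edge of $G$ containing $q^*$, and let $S = \{P_i \in T : w_i \Ed(P_i, q^*) = \lambda^*\}$. By Lemma~\ref{lem:expecteddistanceproperty}, each $w_i \Ed(P_i, x)$ restricted to $x \in e$ is continuous and piecewise linear in $t(x)$, and so is the upper envelope $U(x) = \max_{P_i \in S} w_i \Ed(P_i, x)$, with $U(q^*) = \lambda^*$.

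First I would show that $U$ attains a local minimum at $x = q^*$. Suppose not; then a small shift of $q^*$ along $e$ by $\delta > 0$ strictly decreases $U$. After the shift, every $P_i \in S$ satisfies $w_i \Ed(P_i, q^*) < \lambda^*$ and drops out of $T$. For any $P_l \in T \setminus S$, the min over $Q^*$ is attained at a center other than $q^*$ and is unaffected. For any $P_l \notin T$, the min is strictly below $\lambda^*$ and remains so for $\delta$ small enough by continuity. The new center set is therefore still optimal but has $|T|$ reduced by $|S| \ge 1$, contradicting the choice of $Q^*$. The main obstacle lies in this perturbation step: one must ensure that shifting $q^*$ does not accidentally raise the min of a previously non-critical $P_l$ above $\lambda^*$, and that we do not need to simultaneously ``chase'' other critical centers. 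Both concerns are handled by perturbing only $q^*$, whose movement affects only those $P_i$ whose min is attained at $q^*$ and can only decrease those distances, combined with the strict inequality and continuity for $P_l \notin T$.

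Finally I would classify the local minima of the piecewise linear function $U$. Such a minimum occurs at a breakpoint of $U$ whose left slope is non-positive and right slope is non-negative (any flat minimizing interval can be pushed to either endpoint breakpoint of that interval). A breakpoint of $U$ is either a breakpoint of a single active constituent $w_i \Ed(P_i, x)$ — including the endpoints $u$ and $v$ of $e$, which appear as turning points per Lemma~\ref{lem:expecteddistanceproperty} — or an intersection of two distinct active constituents $w_i \Ed(P_i, x)$ and $w_{i'} \Ed(P_{i'}, x)$. The first possibility yields the ``$\lambda^*$ determined by a breakpoint of some $w_i \Ed(P_i, x)$'' form of the lemma. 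In the second, for $U$ to bend into a local minimum at the intersection, the function active just to the left must have non-positive slope and the function active just to the right must have non-negative slope, i.e., the two slopes at the intersection are of opposite signs, matching the other form of the lemma.
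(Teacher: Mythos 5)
Your proof is correct and follows essentially the same route as the paper's: perturb a center attaining $\lambda^*$ along its edge and observe that the upper envelope of the critical functions $w_i\Ed(P_i,x)$ must attain a local minimum there, which can only be an opposite-slope intersection of two such functions or a breakpoint of one of them. Your version is in fact more careful than the paper's one-line perturbation argument---selecting an optimum that minimizes the number of critical uncertain points cleanly handles the possibility that several centers (or several uncertain points at other centers) simultaneously attain $\lambda^*$, a point the paper's proof glosses over.
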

\begin{proof}
    Suppose that $\lambda^*$ falls into neither of the two cases. 
    Let $x'$ be a center that determines $\lambda^*$, and let $\calP'$ 
    be the subset of uncertain points assigned to $x'$. 
    If $x'$ is in the interior of an edge, e.g., $e(u,v)$, 
    then one can move $x'$ along $e(u,v)$ toward one vertex to reduce the maximum weighted expected distance $\phi(\calP',x')$ of $x'$ to $\calP'$ 
    until $x'$ reaches a point $x''$ so that $(x'', \phi(\calP',x''))$ is either an intersection of the weighted expected distance functions of two uncertain points in $\calP'$ where their slopes are of opposite signs, or a breakpoint of the weighted expected distance function of an uncertain point in $\calP'$, 
    which leads a contradiction with the assumption. Thus, the lemma holds. \qed
\end{proof}

In the following, we assume that for each edge of $G$, functions $\Ed(P_i,x)$ of all uncertain points, i.e., their sets $X_i$, have been determined in the preprocessing work, which takes $O(|E|mn\log m)$ time. Additionally, 
when we talk about points in $d$-dimension space, 
for any point $q$ on an axis, we also use $q$ to 
represent its coordinate along that axis.


\section{The algorithm}\label{alg:general}
We first present in Subsection~\ref{subsec:decision} our feasibility test that decides 
whether $\lambda\geq\lambda^*$ for any given $\lambda>0$, and then give 
the algorithmic approach for finding $\lambda^*$ in Subsection~\ref{subsec:computinglambda}. 

\subsection{The feasibility test}\label{subsec:decision}
Given any value $\lambda>0$, the goal is to decide whether there is 
a set $Q$ of $k$ points on $G$ so that $\phi(Q,\calP)\leq\lambda$. 
If yes then $\lambda\geq\lambda^*$ and $\lambda$ is feasible. 
Otherwise, $\lambda<\lambda^*$ so $\lambda$ is infeasible. 

We say that an uncertain point is \textit{covered} by a point under $\lambda$ if its (weighted) expected distance to that point is no more than $\lambda$. 
Hence, $\lambda^*$ is the smallest feasible value so that there are $k$ points in $G$ such that each $P_i\in\calP$ is covered by one of them under $\lambda^*$. 
We give two algorithms to decide the existence of such $k$ points in $G$ 
for any given $\lambda>0$. 

\paragraph{The first algorithm.} For each edge $e$, consider solving $w_i\Ed(P_i,x)\leq\lambda$ for each $1\leq i\leq n$. This generates for each $P_i$ $O(m)$ disjoint intervals, called feasible intervals, on $e$. Let $Q(E)$ be the union of $V$ and endpoints of the feasible intervals of uncertain points on all edges. Clearly, $|Q(E)| = O(|E|mn)$. We have the following observation.



\begin{observation}\label{obs:candidateset}
    If $\lambda$ is feasible, then there exist $k$ points in $Q(E)$ that can cover $\calP$ under $\lambda$.  
\end{observation}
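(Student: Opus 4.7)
The plan is to start with any feasible placement $Q=\{q_1,\ldots,q_k\}$ (which exists by hypothesis) and show that each $q_l$ can be relocated to a point of $Q(E)$ while preserving coverage of all of $\calP$ under $\lambda$. First I would fix an assignment $\sigma$ of each $P_i$ to a covering center $q_{\sigma(i)}$ and partition $\calP$ into $\calP_1,\ldots,\calP_k$ accordingly, so that $w_i\Ed(P_i,q_l)\leq\lambda$ for every $P_i\in\calP_l$. For each $l$, let $e_l$ be the edge containing $q_l$ (if $q_l$ is a vertex, it already lies in $V\subseteq Q(E)$, so there is nothing to do).

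Next, for each $P_i\in\calP_l$, I would invoke Lemma~\ref{lem:expecteddistanceproperty} to conclude that $w_i\Ed(P_i,x)$ is piecewise linear in $x\in e_l$, so the sublevel set $\{x\in e_l : w_i\Ed(P_i,x)\leq\lambda\}$ is a union of $O(m)$ closed feasible intervals; exactly one such interval $I_l^{(i)}$ contains $q_l$. I then form $I_l=\bigcap_{P_i\in\calP_l}I_l^{(i)}$, which as an intersection of closed subintervals of $e_l$ all containing $q_l$ is itself a nonempty closed subinterval of $e_l$ containing $q_l$.

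Finally, I would replace $q_l$ by any endpoint $q_l'$ of $I_l$. Since $q_l'\in I_l^{(i)}$ for every $P_i\in\calP_l$, the replacement still covers all of $\calP_l$ under $\lambda$. Moreover, an endpoint of $I_l$ must be either an endpoint of some $I_l^{(i)}$ (hence an endpoint of a feasible interval, and thus in $Q(E)$) or an endpoint of $e_l$ (a vertex in $V\subseteq Q(E)$). Repeating this substitution for each $l$ produces $k$ points of $Q(E)$ that cover $\calP$ under $\lambda$.

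I do not anticipate a serious obstacle: the argument is essentially a one-dimensional sliding argument on each edge. The subtlety worth checking is that $I_l$ is genuinely closed (so that its endpoints are attained) and that endpoints of $I_l$ are captured by the definition of $Q(E)$; both points follow directly because each $I_l^{(i)}$ is a closed feasible interval by Lemma~\ref{lem:expecteddistanceproperty} and $Q(E)$ is defined to contain $V$ together with all feasible-interval endpoints on every edge.
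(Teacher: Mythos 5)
Your argument is correct and follows essentially the same route as the paper's proof: take a feasible $k$-point set, intersect the feasible intervals containing each non-$Q(E)$ center on its edge, and snap that center to an endpoint of the intersection, which by construction lies in $Q(E)$. Your write-up is somewhat more careful than the paper's (explicitly fixing the assignment $\sigma$, handling centers that are already vertices, and checking closedness of the intersection), but the underlying idea is identical.
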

\begin{proof}
    Let $Q'$ be any $k$-point set that covers $\calP$ under $\lambda$. Assume a point $q$ of $Q'$ is not in $Q(E)$, and denote by $\calP'$ the subset of uncertain points covered by $q$. Clearly, $q$ is in the interior of a feasible interval of each $P_i\in\calP'$ on some edge, and the intersection containing $q$ of these feasible intervals is an interval. Replacing $q$ with any endpoint 
    of this intersection leads a $k$-point set belonging to $Q(E)$ that covers $\calP$. This proves the lemma. \qed
\end{proof}

The above observation guides the first approach: Compute $Q(E)$ and 
for every $k$ distinct points of $Q(E)$, determine whether they can cover $\calP$ 
under $\lambda$. If no $k$ points of $Q(E)$ cover $\calP$, then $\lambda$ is in feasible. 

Recall that for each edge $e$ and each $P_i\in\calP$, the ordered set $X_i$ obtained in the preprocessing contains $\Ed(P_i,x)$'s turning points on $e$, and each point in $X_i$ is associated with $\Ed(P_i,x)$'s function over the interval between it and its next point. Consequently, it takes $O(m)$ time to solve $w_i\Ed(P_i,x)\leq\lambda$ for $x\in e$, and given any point $p$, value $w_i\Ed(P_i,p)$ can be known in $O(\log m)$ time by a binary search on $X_i$.

Thus, $Q(E)$ can be obtained in $O(|E|mn)$ time. For any $k$ points, we can determine whether they can cover $\calP$ in $O(kn\log m)$ time. Hence, the feasibility of $\lambda$ can be determined in $O(|E|^km^kn^{k+1}k\log m)$ time. We thus have the following result. 
\begin{lemma}\label{lem:approach1}
The feasibility of $\lambda$ can be determined in $O(|E|^km^kn^{k+1}k\log m)$ time. 
\end{lemma}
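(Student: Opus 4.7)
The plan is to combine Observation~\ref{obs:candidateset} with an explicit enumeration over $k$-subsets of $Q(E)$, using the preprocessed representation of each $\Ed(P_i,x)$ to keep per-query costs down to $O(\log m)$. The argument splits naturally into three parts: (i) bounding the size of $Q(E)$ and the time to build it, (ii) bounding the cost of testing whether a fixed $k$-tuple covers $\calP$ under $\lambda$, and (iii) multiplying by the number of $k$-tuples.

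For part (i), I would fix an edge $e$ and an uncertain point $P_i$, and use the ordered set $X_i$ from Lemma~\ref{lem:expecteddistanceproperty}, which partitions $e$ into $O(m)$ intervals on each of which $\Ed(P_i,x)$ is linear and whose linear expression is stored with the left endpoint. A single left-to-right scan of $X_i$ then solves $w_i\Ed(P_i,x)\le\lambda$ on $e$ by intersecting the threshold with each linear piece, producing $O(m)$ feasible sub-intervals in $O(m)$ time. Summing over all $n$ uncertain points and all $|E|$ edges yields $|Q(E)|=O(|E|mn)$ and a construction time of $O(|E|mn)$ (which, together with $|V|=O(|E|)$, is absorbed).

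For part (ii), given a candidate $k$-tuple $Q\subseteq Q(E)$, I would test coverage by iterating over $P_1,\ldots,P_n$: for each $P_i$, I loop through the $k$ points $q\in Q$, locate the edge containing $q$ in constant time, and evaluate $w_i\Ed(P_i,q)$ by binary searching for $q$ in the ordered set $X_i$ associated with that edge, then using the attached linear piece. Each such evaluation costs $O(\log m)$, so each $P_i$ takes $O(k\log m)$ and the whole test runs in $O(kn\log m)$. Since there are $\binom{|Q(E)|}{k}=O(|E|^k m^k n^k)$ $k$-tuples to examine, the total running time is $O(|E|^k m^k n^{k+1}k\log m)$, as required; correctness is immediate from Observation~\ref{obs:candidateset}, which guarantees that if \emph{any} $k$-point cover exists then one lies entirely in $Q(E)$.

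I do not expect a genuine obstacle here: the statement is essentially a consequence of Observation~\ref{obs:candidateset} together with careful bookkeeping, and the precomputed $X_i$ structures already hand us the $O(\log m)$ query and $O(m)$ per-edge solve times. The only point that needs a little care is making sure the $k$ centers are located in their edges in $O(1)$ time (handled trivially since $Q(E)$ is built edge-by-edge, so each point carries its edge label) and that the $O(\log m)$ per-evaluation bound is not accidentally inflated by the outer enumeration; neither is a real difficulty.
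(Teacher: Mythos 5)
Your proposal is correct and follows essentially the same route as the paper: build $Q(E)$ edge-by-edge from the precomputed sets $X_i$ in $O(|E|mn)$ time, test each $k$-tuple for coverage in $O(kn\log m)$ time via binary search on $X_i$, and multiply by the $O(|E|^k m^k n^k)$ candidate tuples, with correctness supplied by Observation~\ref{obs:candidateset}. No substantive differences from the paper's argument.
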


\paragraph{The second algorithm.}Given a set $E'$ of $k'$ edges 
$e_1, \cdots, e_{k'}$ and a set $V'$ of $k''$ distinct vertices 
so that $k' + k'' =k$, the \textit{local} feasibility test is 
to determine whether there are $k'$ points $q_1, \cdots, q_{k'}$ so that 
for each $1\leq s\leq k'$, $q_s$ is in the interior of $e_s$ and $V'\cup\{q_1,\cdots, q_{k'}\}$ cover $\calP$ under $\lambda$. Similar to the algorithm~\cite{ref:BhattacharyaIm14} for $m=1$, we solve the local feasibility test for every $k$-size subset of $E\cup V$ that however, allows for multiple instances for an edge for addressing the situation where an edge may contain multiple centers under $\lambda$. 

For any given subset $E'\cup V'$, we first determine all uncertain points covered by vertices of $V'$, which can be found in $O(kn)$ time since the expected distance of each $P_i\in\calP$ at a vertex (the first or last point in set $X_i$ for an edge) can be known in $O(1)$ time. Next, we determine whether each edge of $E'$ contains a point so that these $k'$ points cover all uncovered uncertain points, which is a special case of the local feasibility test. If yes, then $\lambda$ is feasible, and otherwise, $\lambda$ is infeasible w.r.t. $E'\cup V'$. 

Due to $|V| =O(|E|)$, the algorithm runs in $O(|E|^k (kn + \tau))$, where $\tau$ is the running time for solving the above special case of the local feasibility test. 

We now show in the below how to address the local feasibility test with only edge input: Given a $k$-size multiset of edges $e_1, \cdots, e_k$ in $E$, it asks whether there are $k$ points $q_1, \cdots, q_k$ with each $q_s$ in $e_s$ excluding its endpoints so that they cover $\calP$ under $\lambda$. 

Similar to the algorithm in~\cite{ref:BhattacharyaIm14}, we transform this problem into a Klee's measure problem: Given $M$ axis-parallel boxes in $d$-dimension space, it asks for the volume of the union of these boxes. 
For $d =2$, this measure can be computed in $O(M\log M)$ time~\cite{ref:BentleyAl77,ref:ChazelleOn96} while for $d\geq 3$, this measure can be calculated in $O(M^\frac{d}{2})$ time by Chan's algorithm~\cite{ref:ChanKl13}. 


\paragraph{The local feasibility test on edges.} Consider each edge $e_s$ as a line segment $(0, \overline{e_s})$ on the $s$-th axis in $k$-dimension space. Also, consider the slab of the form $\{(x_1, \cdots, x_k)| 0< x_s<\overline{e_s}\}$. 
The intersection of the slab for all $1\leq s\leq k$ is an axis-parallel box $\Gamma$ in $k$-dimension space; in other words, $\Gamma$ is the cartesian product of segments $(0, \overline{e_s})$ for all $1\leq s\leq k$. 

For any $1\leq i\leq n$, solving $w_i\Ed(P_i,x)>\lambda$ for each $1\leq s\leq k$ generates a set $I^s_i$ of $O(m)$ disjoint open segments $I_{i,1}^s, \cdots, I_{i,m}^s$ on segment $(0, \overline{e_s})$, where each $I_{i,j}^s = (a_{i,j}^t, b_{i,j}^t)$ is called a \textit{forbidden} segment of $P_i$ on $e_s$. Note that if set $I_i^s$ is empty then we prune $P_i$ for this local feasibility 
test in that any point of $e_s$ covers $P_i$. 

Likewise, for each $P_i\in\calP$, as illustrated in Fig.\ref{fig:3dbox}, the cartesian product of its forbidden segments leads a set $B_i$ of $m^k$ axis-parallel boxes in $k$-dimension space each of the form $\{(x_1, \cdots, x_k)| x_1\in I_{i,j_1}^1, \cdots, x_k\in I_{i,j_k}^k\}$. 

\begin{figure}
    \centering
    \includegraphics[width=0.38\linewidth]{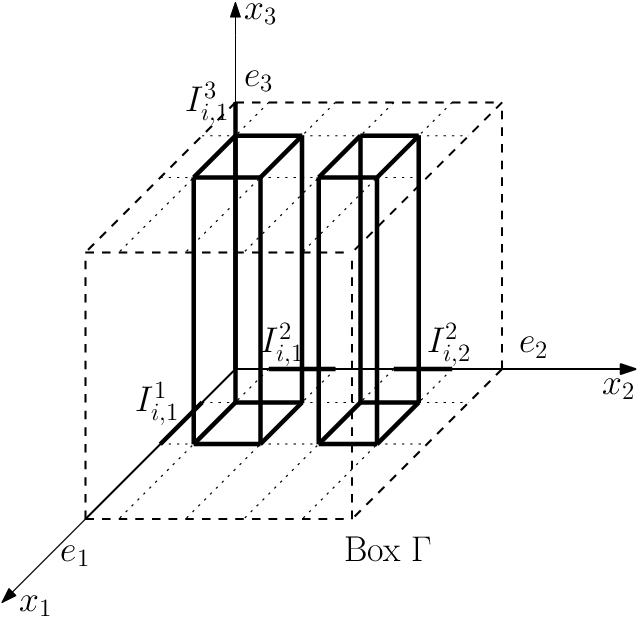}
    \caption{Illustrate that an uncertain point $P_i$ contains one forbidden segment $I^1_{i,1}$ on $e_1$, 
    two forbidden segments $I^2_{i,1}, I^2_{i,2}$ on $e_2$, and one forbidden segment $I^3_{i,1} = e_3$ on $e_3$; the cartesian product of its forbidden segments on $e_1, e_2$ and $e_3$ leads two boxes $\{x_1, x_2, x_3| x_1\in I^1_{i,1}, x_2\in I^2_{i,1}, x_3\in I^3_{i,1}\}$ and $\{x_1, x_2, x_3| x_1\in I^1_{i,1}, x_2\in I^2_{i,2},x_3\in I^3_{i,1}\}$ contained in box $\Gamma$ in $3$-dimension space.}
    \label{fig:3dbox}
\end{figure}

Consider any $k$ points $q_1, \cdots, q_k$ with $q_s\in (0, \overline{e_s})$ for each $1\leq s\leq k$ in $k$-dimension space. Let $q$ be the point whose $s$-th coordinate equals to that of $q_s$, so $q$ is a point in box $\Gamma$. 
Observe that if a box of $B_i$ contains $q$ then $P_i$ cannot be covered by any of the $k$ points. 
It follows that if for every $k$ points 
$q_1, \cdots, q_k$ with $q_s\in (0, \overline{e_s})$ for each $1\leq s\leq k$, the corresponding point $q$ in $k$-dimension space is in a box of set $\cup_{i=1}^{n}B_i$, then $\lambda$ is infeasible w.r.t. the given $k$ edges (excluding their endpoints). This implies the following observation. 

\begin{observation}\label{lem:LocalInfeasibility}
    $\lambda$ is infeasible w.r.t. $e_1, \cdots, e_k$ (excluding their endpoints) iff $\cup_{i=1}^{n} B_i$ covers $\Gamma$. 
\end{observation}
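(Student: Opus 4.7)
The plan is to prove the biconditional as a direct chain of equivalences, built on the one-to-one correspondence between $k$-tuples of candidate centers $(q_1,\ldots,q_k)$ with $q_s$ in the interior of $e_s$ and points $q=(q_1,\ldots,q_k)\in\Gamma$ in $k$-dimensional space. Under this identification, the slab $(0,\overline{e_s})$ encodes exactly the interior of $e_s$, so the $s$-th coordinate of any $q\in\Gamma$ legitimately serves as a choice of center on $e_s$.

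The first step is to translate ``covering'' into a coordinate-wise statement. For any fixed $P_i\in\calP$ and any $q\in\Gamma$, note that $P_i$ is \emph{not} covered by the placement $(q_1,\ldots,q_k)$ iff $w_i\Ed(P_i,q_s)>\lambda$ holds for every $s=1,\ldots,k$, which (by the very definition of the forbidden segments) is equivalent to the existence of indices $j_1,\ldots,j_k$ with $q_s\in I^s_{i,j_s}$ for all $s$; and this in turn is equivalent to $q$ lying in the box $I^1_{i,j_1}\times\cdots\times I^k_{i,j_k}\in B_i$. Hence $P_i$ is uncovered by $(q_1,\ldots,q_k)$ iff $q\in\bigcup B_i$ (that is, in some box of $B_i$).

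For the forward direction ($\Rightarrow$), I would assume $\lambda$ is infeasible w.r.t.\ $e_1,\ldots,e_k$ (excluding endpoints) and pick any $q\in\Gamma$. Viewing $(q_1,\ldots,q_k)$ as a candidate placement, infeasibility forces some $P_i\in\calP$ to be uncovered; by the translation above, $q$ lies in some box of $B_i$, and hence in $\bigcup_{i=1}^n B_i$. Since $q\in\Gamma$ was arbitrary, $\bigcup_{i=1}^n B_i$ covers $\Gamma$. For the reverse direction ($\Leftarrow$), I would argue by contrapositive: if $\lambda$ is feasible w.r.t.\ $e_1,\ldots,e_k$, some placement $(q_1,\ldots,q_k)$ covers $\calP$, so every $P_i$ has at least one index $s$ with $q_s$ outside all its forbidden segments on $e_s$; by the translation, the corresponding $q\in\Gamma$ avoids every box of every $B_i$, witnessing $\Gamma\not\subseteq\bigcup_{i=1}^n B_i$.

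There is no real technical obstacle here; the whole argument is definitional bookkeeping. The only point that demands care is keeping the open/closed conventions consistent: $\Gamma$ is the open box $\prod_s(0,\overline{e_s})$ and each $B_i$ consists of open boxes $\prod_s I^s_{i,j_s}$, matching the decision to exclude the endpoints of $e_1,\ldots,e_k$ in the local feasibility test (vertices in $V$ having already been handled separately in the outer algorithm). This ensures that the cover/avoidance dichotomy over $\Gamma$ is clean and that no spurious covering coming from boundary coincidences needs to be ruled out.
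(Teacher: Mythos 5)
Your proposal is correct and follows essentially the same route as the paper: identifying placements $(q_1,\ldots,q_k)$ with points of $\Gamma$ and observing that $P_i$ is uncovered exactly when $q$ lies in a box of $B_i$. The paper states this correspondence and argues only the forward implication explicitly, so your version, which spells out both directions and the coordinate-wise translation, is simply a more complete write-up of the same argument.
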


Clearly, if the volume of $\cup_{i=1}^{n} B_i$ equals to $\Gamma$'s, then $\lambda$ is infeasible w.r.t. $e_1, \cdots, e_k$, and $\lambda$ is feasible, otherwise. Computing the volume of $\cup_{i=1}^{n} B_i$ is a Klee's measure problem. However, since each box in $\cup_{i=1}^{n} B_i$ is a box whose edges are open segments, we cannot decide if $\cup_{i=1}^{n} B_i$ covers $\Gamma$ by comparing their volumes. 
Instead, a closed box is constructed for each box in set $B_i$ of each $P_i\in\calP$ such that $\lambda$ is infeasible if the volume of their union equals to 
that of $\Gamma$. 

For each segment $e_s$, let $\delta_s$ be the minimum distance larger than zero between endpoints of all forbidden segments on $e_s$. Let $\delta = \frac{1}{2}\min_{1\leq s\leq k}\delta_s$. For each forbidden segment $I_{i,j}^s = (a_{i,j}^s, b_{i,j}^s)$ of $P_i$, 
consider the segment $L_{i,j}^s = [\alpha_{i,j}^s, \beta_{i,j}^s]$ on $e_s$ following the rules: 
If $a_{i,j}^s = 0$ then $\alpha_{i,j}^s = a_{i,j}^s$, 
and otherwise, $\alpha_{i,j}^s = a_{i,j}^s +\delta$; if $b_{i,j}^s = \overline{e_s}$ then $\beta_{i,j}^s = b_{i,j}^s$, and otherwise, $\beta_{i,j}^s = b_{i,j}^s -\delta$. For each $P_i\in\calP$, the cartesian product of its new segment set $\{L_{i,1}^s, \cdots, L_{i,m}^s\}$ for all $1\leq s\leq k$ leads a set $D_i$ of $m^k$ axis-parallel boxes each of the form $\{(x_1, \cdots, x_k)| x_1\in L_{i,j_1}^1, \cdots, x_k\in L_{i,j_k}^k\}$ 
in $k$-dimension space. Last, let $\Gamma'$ be the box $\Gamma$ including its vertices. 

For $m=1$, each set $D_i$ contains only one box in $k$-dimension space since $P_i$ has only one forbidden segment on each edge; it was proved in~\cite{ref:BhattacharyaIm14} that $\cup_{i=1}^n D_i$ covers $\Gamma'$ if and only if $\cup_{i=1}^n B_i$ covers $\Gamma$. Because each $D_i$ is obtained by shrinking each edge of $B_i$ from both sides by value $\delta$, which is half of the minimum distance between parallel faces of all boxes. That observation applies to our case where each $D_i$ contains multiple axis-parallel boxes in $k$-dimension space. We thus have the following lemma. 

\begin{lemma}\label{lem:boxcoverage}
    $\cup_{i=1}^n D_i$ covers $\Gamma'$ if and only if $\cup_{i=1}^n B_i$ covers $\Gamma$. 
\end{lemma}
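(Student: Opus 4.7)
The plan is to reduce the claim to the purely geometric lemma established in~\cite{ref:BhattacharyaIm14} for their $m=1$ case. That lemma applies to any finite collection $\mathcal{B}$ of axis-parallel open boxes inside $\Gamma=\prod_{s=1}^k(0,\overline{e_s})$: if $\delta$ is at most half the minimum positive distance between any two parallel faces of $\mathcal{B}$, and $\mathcal{D}$ is the corresponding collection of closed boxes obtained by contracting each face by $\delta$ (leaving faces on $\partial\Gamma$ fixed), then $\cup\mathcal{B}\supseteq\Gamma$ iff $\cup\mathcal{D}\supseteq\Gamma'$. In our setting $\cup_{i=1}^n B_i$ is precisely such a collection of $nm^k$ open boxes: every face along axis $s$ lies on the grid $\{0,\overline{e_s}\}\cup\{a^s_{i,j},b^s_{i,j}\}$ whose minimum positive pairwise gap equals $\delta_s$, so $\delta=\tfrac12\min_s\delta_s$ satisfies the lemma's hypothesis, and the construction of $L^s_{i,j}$ matches its shrinking rule verbatim. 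The fact that several of these $nm^k$ boxes share a common uncertain-point index (since each $P_i$ contributes $m^k$ boxes from the Cartesian product of its forbidden segments) is immaterial for a purely geometric argument.

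For the easy direction $\cup_i D_i\supseteq\Gamma'\Rightarrow\cup_i B_i\supseteq\Gamma$, take $q\in\Gamma$ and let $j,(j_1,\ldots,j_k)$ be such that $q\in\prod_s[\alpha^s_{j,j_s},\beta^s_{j,j_s}]$. The shrinking rule gives $\alpha^s_{j,j_s}\ge a^s_{j,j_s}$, with equality only when $a^s_{j,j_s}=0$; combined with $q_s>0$ (from $q\in\Gamma$) this forces $q_s>a^s_{j,j_s}$, and symmetrically $q_s<b^s_{j,j_s}$. Hence $q\in\prod_s I^s_{j,j_s}$, which is a box of $B_j$.

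The converse is the main content. Following~\cite{ref:BhattacharyaIm14}, given $q\in\Gamma'$ one perturbs each coordinate of $q$ that lies on a grid hyperplane by strictly less than $\delta$ into an adjacent grid cell (possible because every cell has width at least $2\delta$), obtaining a nearby $q^*\in\Gamma$ that is covered by some open box of $\cup_i B_i$; the $\delta$-separation of grid hyperplanes then exhibits a closed box in $\cup_i D_i$ containing the original $q$. The main obstacle is that this witnessing $D$-box need not be associated with the same uncertain point as the $B$-box that covers $q^*$: when $q$ lies near an interior grid hyperplane, the shrunken box derived from the ``obvious'' $B$-box can fall short of $q$ by up to $\delta$, so one must pass to a box of a different uncertain point whose forbidden segments straddle that hyperplane on both sides. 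The existence of such a straddling point follows from the global coverage $\cup_i B_i\supseteq\Gamma$ together with $\delta\le\delta_s/2$ (any interior grid point $c$ lies in some open forbidden segment $(a,b)$ with $a\le c-2\delta$ and $b\ge c+2\delta$, whose shrinkage $[a+\delta,b-\delta]\supseteq[c-\delta,c+\delta]$ absorbs the perturbation), exactly as in the $m=1$ argument of~\cite{ref:BhattacharyaIm14}.
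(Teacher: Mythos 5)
Your proposal takes essentially the same route as the paper: the paper also justifies the lemma by appealing to the purely geometric shrinking argument of Bhattacharya and Shi for the $m=1$ case and observing that the grouping of the $nm^k$ boxes by uncertain-point index is immaterial. In fact your write-up supplies more detail than the paper does (both implications, plus the near-hyperplane subtlety in the converse), so it is correct and, if anything, more complete.
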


Now we are ready to describe the algorithm for the local feasibility test. Compute the universe box $\Gamma'$ in $O(k)$ time by determining the coordinates of all endpoints of segments $e_1, \cdots e_k$. Next, for each $1\leq i\leq n$, we solve $w_i\Ed(P_i, x)>\lambda$ w.r.t. each $e_s$ to find $O(m)$ forbidden segments of $P_i$, which can be done totally in $O(kmn)$ time. 
We prune every uncertain point that has no forbidden segments on an edge. 
In case no uncertain points remain, we immediately return that $\lambda$ is feasible. 

Proceed to compute $\delta$ in $O(kmn\log mn)$ time by spending $O(mn\log mn)$ time on finding the closest pair of endpoints of distance larger than zero for all forbidden segments on each edge. 
Next, we construct the set $D_i$ of $O(m^k)$ boxes for each $P_i\in\calP$ as the above described, 
which can be carried out totally in $O(knm^k)$ time. 

Further, we compute the volume of $\cup_{i=1}^{n}D_i$: For $k=2$, this measure can be obtained in $O(nm^2\log nm^2)$ time~\cite{ref:BentleyAl77,ref:ChazelleOn96}; for $k\geq 3$, it can be computed in $O(n^\frac{k}{2}m^\frac{k^2}{2})$ time~\cite{ref:ChanKl13}. Last, by comparing their volumes, we can decide whether $\cup_{i=1}^{n}D_i$ 
covers $\Gamma$. 

Hence, we can determine the feasibility of $\lambda$ w.r.t. any $k$ edges of $E$ (excluding the endpoints) in $O(nm^2\log mn)$ time for $k=2$ and in $O(n^\frac{k}{2}m^\frac{k^2}{2})$ time for $k\geq 3$.  

\paragraph{Wrapping things up.} 
Recall that the second approach solves the feasibility test in $O(|E|^k \cdot(kn+\tau))$ time, where $\tau$ is the running time of the local feasibility test on $k$ input edges and it follows the above result. 

Combining with Lemma~\ref{lem:approach1}, we have the following result for the feasibility test. 
Notice that for $k>3$, the first algorithm is more efficient when $m \geq n$. 

\begin{lemma}\label{lem:approach2}
For $k=2$, the feasibility of $\lambda$ can be determined in $O(|E|^2m^2n\log mn)$ time; for $k\geq 3$, the feasibility of $\lambda$ can be determined in $O(\min\{|E|^km^kn^{k+1}k\log m, |E|^km^\frac{k^2}{2}n^\frac{k}{2}\}\} )$ time. 
\end{lemma}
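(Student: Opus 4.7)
The plan is to combine the running times of the two algorithms described above and take the better of them in each parameter regime.

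First, I would pin down the running time of the second algorithm in aggregate. It iterates over every multiset of size $k$ from $V \cup E$, with the convention that an edge may be repeated in the multiset (to allow several centers on the same edge) while vertices are distinct. Since the problem assumes $|V| = O(|E|)$, the number of such multisets is $O(|E|^k)$. For each choice $E' \cup V'$, handling $V'$ (checking which uncertain points are already covered by vertices) costs $O(kn)$ using the $O(1)$-time evaluation of $w_i\Ed(P_i, \cdot)$ at a vertex via the preprocessed sets $X_i$. The remaining edges feed into the local feasibility test, whose running time $\tau$ was established just above: $\tau = O(nm^2 \log mn)$ for $k=2$, and $\tau = O(n^{k/2} m^{k^2/2})$ for $k \geq 3$. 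Thus the second algorithm runs in $O(|E|^k (kn + \tau))$.

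Next, I would plug in and simplify. For $k=2$, the $kn$ term is absorbed by $nm^2 \log mn$, yielding $O(|E|^2 m^2 n \log mn)$. For $k \geq 3$, the $kn$ term is again absorbed by $n^{k/2} m^{k^2/2}$ (for $k\geq 3$ the exponent of $n$ is already at least $3/2$), yielding $O(|E|^k n^{k/2} m^{k^2/2})$. One should also confirm that the auxiliary work for setting up the boxes is not the bottleneck: solving $w_i\Ed(P_i,x) > \lambda$ on all $k$ edges takes $O(kmn)$, computing $\delta$ takes $O(kmn \log mn)$, and constructing the $O(m^k)$ boxes per uncertain point takes $O(knm^k)$; each of these is dominated by $\tau$ in the relevant range.

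Finally, I would combine with Lemma~\ref{lem:approach1}. For $k=2$, the second algorithm alone gives the claimed bound. For $k \geq 3$, the feasibility test can be run using whichever of the two algorithms is faster, giving the min bound $O(\min\{|E|^k m^k n^{k+1} k \log m,\ |E|^k m^{k^2/2} n^{k/2}\})$.

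The proof is essentially a bookkeeping exercise, so I expect no real obstacle beyond verifying that (i) the second algorithm is correct when combined across all $(E', V')$ decompositions---which follows because the optimal $k$-center set partitions naturally into an ``edge-interior'' part and a ``vertex'' part, and the local test correctly handles arbitrary multiplicities of edges---and (ii) all lower-order preprocessing costs are absorbed by the dominating term in each regime, as checked above.
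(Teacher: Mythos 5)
Your proposal is correct and follows essentially the same route as the paper, which likewise justifies this lemma purely by the ``wrapping up'' bookkeeping: plugging $\tau = O(nm^2\log mn)$ (for $k=2$) or $\tau = O(n^{k/2}m^{k^2/2})$ (for $k\geq 3$) into the $O(|E|^k(kn+\tau))$ bound of the second algorithm and taking the minimum with Lemma~\ref{lem:approach1}. Your explicit check that the setup costs (computing $\delta$, constructing the boxes) are absorbed is, if anything, slightly more careful than the paper's own treatment.
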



\subsection{Computing $\lambda^*$}\label{subsec:computinglambda}

Lemma~\ref{lem:candidateset} implies that $\lambda^*$ is of the form either $w_i\Ed(P_i,x) = w_j\Ed(P_j,x)$ or decided by a breakpoint of $w_i\Ed(P_i,x)$ for some $1\leq i\leq n$. To find $\lambda^*$, we implicitly enumerate this candidate set of $\lambda^*$ as the set of $y$-coordinates of intersections between lines containing line segments on graphs of functions $y = w_i\Ed(P_i,x)$, so that we can employ the line arrangement search technique~\cite{ref:ChenAn13} 
to find $\lambda^*$ with the assistance of our feasibility test. 

Consider the $x,y$-coordinate system where every edge $e$ is 
a line segment $[0, \overline{e}]$ on the $x$-axis. 
Consider for each $1\leq i\leq n$ function $y = w_i\Ed(P_i,x)$ w.r.t. each edge. 
Let $L$ be the set of lines containing line segments on graphs of 
functions $y=w_i\Ed(P_i,x)$ and vertical lines through $e$'s endpoints 
on the $x$-axis. Clearly, $\lambda^*$ is the $y$-coordinate of the lowest intersection 
between lines in $L$ with a feasible $y$-coordinate. $|L| = O(|E|mn)$ and $L$ can be 
obtained in $O(|E|mn)$ time (since $\Ed(P_i,x)$ for each edge has been determined in the preprocess work). 

Denote by $\mathcal{A}_L$ the arrangement of lines in $L$. 
In $\mathcal{A}_L$, every intersection of lines defines a vertex of $\mathcal{A}_L$ and vice versa. 
Let $v_1$ be the lowest vertex of $\mathcal{A}_L$ whose $y$-coordinate $y_{v_1}$ is a feasible value, 
and let $v_2$ be the highest vertex of $\mathcal{A}_L$ whose $y$-coordinate $y_{v_2}$ is smaller 
than $y_{v_1}$. By the definitions, $y_{v_2}<\lambda^*\leq y_{v_1}$ and no vertices in $\mathcal{A}_L$ 
have $y$-coordinates in range $(y_{v_2}, y_{v_1})$. Clearly, $\lambda^* = y_{v_1}$. 

Lemma 2 in~\cite{ref:ChenAn13} can be utilized to find the two vertices without constructing $\mathcal{A}_L$ 
by employing our Lemma~\ref{lem:approach2} to decide the feasibility of the $y$-coordinate 
for every tested vertex of $\mathcal{A}_L$. The time complexity is $O((|L|+ T)\log |L|)$ 
where $T$ is the running time of our feasibility test. Plus the preprocessing time for constructing functions $\Ed(P_i,x)$, we have the following theorem. 



\begin{theorem}\label{the:kcenter}
    The $k$-center problem can be solved in $O(|E|^2m^2n\log ^2mn + |E|^2m^2n\log |E|)$ for $k=2$, and solved in $O(\min\{|E|^km^kn^{k+1}k\log |E|mn\log m, |E|^kn^\frac{k}{2}m^\frac{k^2}{2}\log |E|mn\})$ time for $k\geq 3$. 
\end{theorem}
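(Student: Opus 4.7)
The plan is to compile the pieces already laid out in Section~\ref{subsec:computinglambda}: use Lemma~\ref{lem:candidateset} to pin $\lambda^*$ down as a vertex of a line arrangement, and then run an arrangement-search subroutine driven by the feasibility oracle of Lemma~\ref{lem:approach2}. First, I would fix the line set $L$ already described in the excerpt --- the lines supporting the $O(m)$ linear pieces of $y = w_i\Ed(P_i,x)$ on every edge $e$, plus the vertical lines through the endpoints of $e$. By Lemma~\ref{lem:expecteddistanceproperty} the preprocessing yields all $n|E|$ graphs in $O(|E|mn\log m)$ time, and hence $|L| = O(|E|mn)$ and $L$ is computed within the same bound.

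Next, I would verify that $\lambda^*$ is the smallest feasible $y$-coordinate among the vertices of the arrangement $\mathcal{A}_L$. By Lemma~\ref{lem:candidateset}, $\lambda^*$ is either the $y$-value of an intersection of two functions $w_i\Ed(P_i,x)$ and $w_j\Ed(P_j,x)$ with opposite-sign slopes --- already a vertex of $\mathcal{A}_L$ --- or determined by a breakpoint of some $w_i\Ed(P_i,x)$, i.e., the intersection of two consecutive supporting lines of that piecewise-linear graph, which is again a vertex of $\mathcal{A}_L$. Since feasibility is monotone in $y$ and $\lambda^*$ is itself feasible by definition, $\lambda^*$ coincides with the $y$-coordinate of the lowest feasible vertex $v_1$ of $\mathcal{A}_L$.

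I would then invoke Lemma~2 of~\cite{ref:ChenAn13}, which locates the pair $(v_1, v_2)$ without materializing $\mathcal{A}_L$ in $O((|L|+T)\log|L|)$ time, where $T$ is the cost of a single feasibility call. Substituting from Lemma~\ref{lem:approach2}: for $k=2$, $T = O(|E|^2m^2n\log mn)$ and $\log|L| = O(\log|E|mn)$, which after expanding $\log mn \cdot \log|E|mn$ simplifies to the claimed $O(|E|^2m^2n\log^2 mn + |E|^2m^2n\log|E|)$ and absorbs the $O(|E|mn\log m)$ preprocessing. For $k\geq 3$, taking the min of the two bounds from Lemma~\ref{lem:approach2} and multiplying by $\log|L|$ gives the stated $O(\min\{|E|^km^kn^{k+1}k\log|E|mn\log m, |E|^kn^{k/2}m^{k^2/2}\log|E|mn\})$.

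The main obstacle I anticipate is technical bookkeeping rather than a new idea: one has to confirm that Chen's arrangement-search actually applies when the candidate set is described by extended lines rather than the finite graph pieces, and that including the vertical endpoint-lines prevents any spurious intersection lying outside the relevant $x$-strip of its defining lines from undercutting $\lambda^*$ as a ``feasible vertex.'' Once these verifications are in place, the stated bounds follow directly by substitution, and in every regime the preprocessing term is dominated by the arrangement-search term.
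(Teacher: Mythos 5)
Your proposal matches the paper's own argument essentially step for step: the same line set $L$ of size $O(|E|mn)$, the same reduction of $\lambda^*$ to the lowest feasible vertex of the arrangement $\mathcal{A}_L$ via Lemma~\ref{lem:candidateset}, the same invocation of Lemma~2 of~\cite{ref:ChenAn13} with the feasibility oracle of Lemma~\ref{lem:approach2}, and the same substitution of bounds. Your closing worry about spurious intersections of extended lines is harmless for the same reason the paper implicitly relies on: feasibility of a $y$-value is monotone and independent of which vertex realizes it, so extra vertices above $\lambda^*$ cannot undercut the answer.
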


\section{Solving the one-center problem}\label{alg:onecenter}
For $k=1$, our strategy is to find the \textit{local} center of $\calP$ 
on each edge of $G$ so that the one with the smallest objective value is their center. 
For each edge $e$, consider function $y = \Ed(P_i,x)$ for each $P_i\in\calP$ 
in the $x,y$-coordinate plane, and define $\phi_e(\calP,x) = \max_{P_i\in\calP}w_i\Ed(P_i,x)$. By Lemma~\ref{lem:expecteddistanceproperty}, $\phi_e(\calP,x)$ is the upper envelope of functions $w_i\Ed(P_i,x)$ (polygonal chains) for $x\in e$ and the local center of $\calP$ on $e$ is decided by the lowest point on their upper envelope (see Fig.\ref{fig:Edfunction}). 

Recall that $X_i = \{x_1, \cdots, x_{m_i}\}$ is the ordered set of 
all turning points of function $y = w_i\Ed(P_i,x)$ on $e$, which 
has been determined in the preprocessing work. 
Let $X = \{1\leq s\leq M\}$ be the union of $X_i$ for all $1\leq i\leq n$ 
and $M=O(mn)$. Clearly, $ y = w_i\Ed(P_i,x)$ is a line segment 
over each interval $[x_s, x_{s+1}]$ with $1\leq s< M$.


Consider the problem of computing the lowest points on the upper envelope 
of functions $w_i\Ed(P_i,x)$'s over interval $ [x_s, x_{s+1}]$ for all $1\leq s < M$ from left to right. 
Because the upper envelope of lines is geometric dual to the convex (lower) hull of 
points~\cite{ref:BrodalDy02}. This problem can be addressed by applying the dynamic convex-hull maintenance data structure~\cite{ref:BrodalDy02} which supports logarithmic-time insertion, deletion, query about the lowest point on the dual upper envelope, and query about the point where a vertical line intersects the dual upper envelope. The details are presented as follows. 

Merge sets $X_i$ for all $1\leq i\leq n$ into a whole ordered list $X$; 
remove duplicates from $X$ while compute for each point $x_s\in X$ 
a subset $\calP(x_s)$ of uncertain points so that each $P_i\in\calP(x_s)$ 
has its function $\Ed(P_i,x)$ turning at $x_s$ and it is associated with its function $\Ed(P_i,x)$ over $[x_s, x_{s+1}]$. These operations can be carried out in $O(mn\log mn)$ time by sorting sets $X_i$ and then looping $X$ to remove duplicates while computing that subset for each point. Note that $x_1\in X$ is a vertex of $e$ and the other vertex of $e$ is the last point $x_M$ of $X$; additionally, subset $\calP(x_1) = \calP$ and $\calP(x_M) = \phi$. 

Proceed to find the lowest point on the upper envelope $\phi_e(\calP,x)$ over 
intervals $ [x_s, x_{s+1}]$ in order. 
For $x_1\in X$, we construct in $O(n\log n)$ time the dynamic convex-hull maintenance data structure $Y$~\cite{ref:BrodalDy02} for $n$ lines generated by extending the line segment 
of each $y =w_i\Ed(P_i,x)$ for $x\in [x_{1},x_{2}]$. 
Then, we perform in order the following three queries on $Y$ in $O(\log n)$ time: 
the lowest-point query and the queries about the two intersections where vertical lines $x =x_1$ and $x=x_2$ 
intersect the upper envelope. For our purpose, we keep only the lowest one of the obtained points 
in slab $x_1\leq x\leq x_2$. 

Starting with $s=2$, we compute the lowest point of $\phi_e(\calP,x)$ over intervals $[x_s, x_{s+1}]$ 
with $2\leq s< M$ as follows. 
Suppose we are about to compute the lowest point of $\phi_e(\calP,x)$ for $x\in [x_s, x_{s+1}]$. 
At this moment, $Y$ maintains the upper envelope of lines containing line segments of functions 
$y = w_i\Ed(P_i,x)$ over the previous interval $[x_{s-1},x_{s}]$. 
Recall that subset $\calP(x_s)$ contains all uncertain points whose functions $\Ed(P_i,x)$ turn at $x=x_s$. 
For each $P_i\in\calP(x_s)$, we first delete from $Y$ its function over $[x_{s-1}, x_{s}]$ in $O(\log n)$ time, 
and then insert its function over $[x_{s}, x_{s+1}]$, which is associated with $P_i$ in $\calP(x_s)$. 
After these $2|\calP(x_s)|$ updates on $Y$, we find the lowest point of $\phi_e(\calP,x)$ over 
$[x_{s}, x_{s+1}]$ as for interval $[x_1, x_2]$ by performing the three queries for 
interval $[x_s, x_{s+1}]$. Last, find the lowest point of $\phi_e(\calP,x)$ 
over $[x_1, x_{s+1}]$ in $O(1)$ time. 

It is easy to see that the processing time for interval $[x_s, x_{s+1}]$ 
is dominated by the term $(mn\log mn + 2|\calP(x_s)|\log n)$. 
Thus, with $O(mn\log m)$ preprocessing work for determining functions $\Ed(P_i,x)$ for $x\in e$, 
the lowest point of $\phi_e(\calP,x)$, i.e., the local center of $\calP$ on $e$ and its objective value, 
can be found in $O(mn\log mn + \log n\cdot\sum_{s=1}^{M}|\calP(x_s)|)$ time, 
which is $O(mn\log mn)$. 

Apply the above routine to every edge of $E$ to find every local center of $\calP$. 
As a result, with the provided distance matrix, the center of $\calP$ on $G$ can be found 
in $O(|E|mn\log mn)$ time. 


\begin{theorem}
    The one-center problem can be solved in $O(|E|mn\log mn)$ time. 
\end{theorem}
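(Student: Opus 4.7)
The plan is to decompose the problem into $|E|$ independent local problems: for each edge $e$ I will compute the local center, that is, the point $x^\star(e) \in e$ minimizing $\phi_e(\calP,x) = \max_{P_i\in\calP} w_i \Ed(P_i,x)$, and then output the best of these $|E|$ local centers. By Lemma~\ref{lem:expecteddistanceproperty} each $w_i\Ed(P_i,x)$ restricted to $e$ is a polygonal chain with $O(m)$ pieces, so on $e$ the function $\phi_e(\calP,x)$ is the upper envelope of $n$ such chains, and its global minimum is attained at either a breakpoint of one chain, a vertex of the upper envelope, or an endpoint of $e$.

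First I would merge the ordered sets $X_i$ computed in the preprocessing into a single sorted list $X = x_1 < x_2 < \cdots < x_M$ on $e$, with $M = O(mn)$, and, while removing duplicates, attach to each $x_s$ the subset $\calP(x_s)$ of uncertain points whose piece changes at $x_s$. This costs $O(mn \log mn)$ time. On each elementary strip $[x_s, x_{s+1}]$, every function $w_i \Ed(P_i,x)$ is a single line, so computing $x^\star(e)$ reduces to the following subproblem: given $n$ lines, find the lowest point of their upper envelope restricted to $[x_s, x_{s+1}]$, for every $s = 1, \ldots, M-1$, while updating the set of lines at each breakpoint.

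The key step is to perform these updates efficiently. Using the standard duality sending a line $y=ax+b$ to the point $(a,-b)$, the upper envelope of the current set of lines becomes the lower convex hull of a point set, and I would instantiate the Brodal--Jacob dynamic convex hull data structure~\cite{ref:BrodalDy02}, which supports insertion, deletion, lowest-point query, and vertical-line query all in $O(\log n)$ time. When moving from strip $[x_{s-1},x_s]$ to strip $[x_s,x_{s+1}]$, for each $P_i \in \calP(x_s)$ I would delete the old line of $w_i\Ed(P_i,x)$ and insert the new one, and then issue three queries: the intersections of the envelope with the vertical lines $x=x_s$ and $x=x_{s+1}$, plus the global lowest-point query. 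The minimum over these three (keeping the lowest-point answer only if its $x$-coordinate lies in $[x_s,x_{s+1}]$) is the lowest point of $\phi_e(\calP,x)$ restricted to this strip, because the restricted minimum of a convex-from-below envelope is either at an endpoint of the strip or at the global minimizer when it falls inside. The running best across strips gives $x^\star(e)$.

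The main obstacle, and the thing I would check most carefully, is the amortized cost of the updates. Since each $P_i$ contributes $O(m)$ turning points to $X$, one has $\sum_s |\calP(x_s)| = O(mn)$, so the total update cost on $e$ is $O(mn \log n)$; combined with $O(M) = O(mn)$ triples of queries costing $O(\log n)$ each and the $O(mn\log mn)$ merging step, each edge is handled in $O(mn\log mn)$ time. Multiplying by $|E|$ edges and adding the $O(|E|mn\log m)$ preprocessing time from Lemma~\ref{lem:expecteddistanceproperty} yields the claimed $O(|E|mn\log mn)$ bound. Minor care is needed to initialize the structure correctly on the first strip (where $\calP(x_1) = \calP$) and to treat the two endpoints of $e$ as legitimate candidate locations for the local center.
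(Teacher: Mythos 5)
Your proposal is correct and follows essentially the same route as the paper: per-edge decomposition, merging the turning-point sets $X_i$ into a single ordered list with the subsets $\calP(x_s)$, maintaining the upper envelope of the current $n$ lines via the Brodal--Jacob dynamic convex hull structure, and issuing the same three queries per elementary strip with the same amortized accounting $\sum_s |\calP(x_s)| = O(mn)$. No substantive differences from the paper's proof.
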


\section{Conclusion}
In this paper, we study the $k$-center problem with $n$ (weighted) uncertain points on an undirected simple graph. We give the first exact algorithms for solving this problem respectively under $k =1$, $k=2$, and $k\geq 3$. 
Our problem generalizes the continuous graph $k$-center problem by assuming that the location of each demand customer follows a discrete distribution, which is quite practical especially for locations with measurement errors or dynamic customers. Since the input of our problem involves the probabilistic points on graphs, the developed algorithms may provide insights for addressing related problems in probabilistic combinatorics. Further, due to the NP-hardness of this problem, it is quite interesting to explore the approximation algorithms for solving it and study the problem on special graphs including the cactus graph, the partial $k$-tree, etc. 


\section*{Statements and Declarations}

\paragraph{CRediT authorship contribution statement.} 
\textbf{Haitao Xu:} Writing - review \& editing, Writing - original draft, Investigation, Formal analysis. \textbf{Jingru Zhang:} Writing - review \& editing, Supervision, Investigation, Funding acquisition, Formal analysis.

\paragraph{Declaration of Competing Interests.} The authors have no relevant financial or non-financial interests to disclose. 



\paragraph{Data availability.} This research did not use or generate any data set.

\paragraph{Funding.} This work was supported by U.S. National Science Foundation under Grant CCF-2339371. 

\end{document}